\newtheorem{thm}{\bf{Theorem}}
\newtheorem{lem}[thm]{\bf {Lemma}}
\newtheorem{prop}[thm]{\bf {Proposition}}
\newcommand{\mmse}{\mathsf{mmse}}
\newcommand*{\QEDA}{\hfill\ensuremath{\square}}
\tikzstyle{int}=[draw, fill=blue!10, minimum height = 1cm, minimum width=1.5cm,thick ]
\tikzstyle{sint}=[draw, fill=blue!10, minimum height = 0.5cm, minimum width=0.8cm,thick ]
\tikzstyle{sum}=[circle, fill=blue!10, draw=black,line width=1pt,minimum size = 0.5cm, thick ]
\tikzstyle{ssum}=[circle, fill=blue!10,draw=black,line width=1pt,minimum size = 0.1cm]
\tikzstyle{int1}=[draw, fill=blue!10, minimum height = 0.5cm, minimum width=1cm,thick ]
\tikzstyle{enc}=[draw, fill=blue!10, minimum height = 2.7cm, minimum width=1cm,thick ]
\tikzstyle{int}=[draw, fill=blue!10, minimum height = 1cm, minimum width=1.5cm,thick ]
\title{\LARGE \bf
Lossy Compression of Decimated Gaussian Random Walks
} 
\author{Georgia Murray, Alon Kipnis, and Andrea J. Goldsmith\\
Department of Electrical Engineering\\
Stanford University, Stanford, CA, USA%
}
\begin{document}

\begin{NoHyper}

\maketitle
\thispagestyle{empty}
\pagestyle{empty}

\begin{abstract}
    We consider the problem of estimating a Gaussian random walk from a lossy compression of its decimated version. Hence, the encoder operates on the decimated random walk, and the decoder estimates the original random walk from its encoded version under a mean squared error (MSE) criterion. It is well-known that the minimal distortion in this problem is attained by an \emph{estimate-and-compress} (EC) source coding strategy, in which the encoder first estimates the original random walk and then compresses this estimate subject to the bit constraint. In this work, we derive a closed-form expression for this minimal distortion as a function of the bitrate and the decimation factor. Next, we consider a \emph{compress-and-estimate} (CE) source coding scheme, in which the encoder first compresses the decimated sequence subject to an MSE criterion (with respect to the decimated sequence), and the original random walk is estimated only at the decoder. 
    We evaluate the distortion under CE in a closed form and show that there exists a non-zero gap between the distortion under the two schemes. This difference in performance illustrates the importance of having the decimation factor at the encoder. 
\end{abstract}

\begin{keywords}
    Indirect source coding, Gaussian random walk, Wiener process
\end{keywords}

\section{Introduction \label{sec:intro}}

\begin{figure} [b]
\begin{center}
\begin{tikzpicture} [auto,>=latex]
 \node at (0,0) (source) {$X^N$} ;
 \node[int1, right of = source , node distance = 2cm] (decimator) {$\downarrow M$};
 \node[int1, right of = decimator , node distance = 3cm, text width = 0.7cm] (interp) {Est};
 \node[coordinate, right of = interp, node distance = 1cm] (right_up) {};  
  \node[int1, below of = interp, node distance = 1.5cm] (enc1) {$\mathrm{Enc}$};  
\draw[->,line width = 2pt] (source) -- (decimator); 
\draw[->,line width = 2pt] (decimator) -- node[above] {$Y^{N_M}$} (interp); 
\draw[line width = 2pt] (interp) -- node[above, xshift = 0.5cm] {$\widetilde{X}^N$} (right_up); 
\draw[->,line width = 2pt] (right_up) |- (enc1); 
\node[int1, below of = decimator, node distance = 1.5cm ] (dec) {Dec};
\node[below of = source, node distance = 1.5cm] (dest) {$\widehat{X}^N$};
\draw[<->,dashed] (source) -- node[right] {$\|x-\widehat{x}\|^2$} (dest);
\draw[->,line width = 2pt] (enc1) -- node[above, xshift = 0cm] (mes1) {$ \left\{0,1\right\}^{\lfloor N R \rfloor} $} (dec);   
\draw[->, line width=2pt] (dec) -- (dest);
\end{tikzpicture}
\end{center}
\caption{\label{fig:ECsys} 
Estimate-and-compress (EC) strategy. The decimated data is first used to obtain an optimal MSE estimate of the source sequence $X^N$. The encoder then describes this estimate to the decoder using $NR$ bits. 
}
\end{figure}
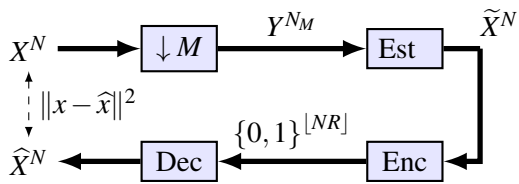

\begin{figure} [b]
\begin{center}
\begin{tikzpicture} [auto,>=latex]
 \node at (0,0) (source) {$X^N$} ;
 \node[int1, right of = source , node distance = 2cm] (decimator) {$\downarrow M$};
 \node[int1, right of = decimator , node distance = 3cm ] (interp) {Enc};
 \node[coordinate, right of = interp, node distance = 1cm] (right_up) {};  
  \node[int1, below of = interp, node distance = 1.5cm] (enc1) {$\mathrm{Dec}$};  
\draw[->,line width = 2pt] (source) -- (decimator); 
\draw[->,line width = 2pt] (decimator) -- node[above] {$Y^{N_M}$} (interp); 
\draw[line width = 2pt] (interp) -- node[above, xshift = 0.5cm] {$ \left\{0,1\right\}^{\lfloor N R \rfloor} $} (right_up); 
\draw[->,line width = 2pt] (right_up) |- (enc1); 


\node[int1, below of = decimator, node distance = 1.5cm, text width = 0.7cm ] (dec) {Est};


\node[below of = source, node distance = 1.5cm] (dest) {$\widehat{X}^N$};



\draw[<->,dashed] (source) -- node[right] {$\|x-\widehat{x}\|^2$} (dest);

\draw[->,line width = 2pt] (enc1) -- node[above] {$\widehat Y^{N_M}$}(dec);   

\draw[->, line width=2pt] (dec) -- (dest);
\end{tikzpicture}
\end{center}
\caption{\label{fig:CEsys} Compress-and-estimate (CE) strategy. The observed sequence $Y^{N_M}$ is encoded so as to minimize the MSE between $Y^{N_M}$ and $\widehat{Y}^{N_M}$. The source sequence $X^N$ is estimated from the output of the decoder.}
\end{figure}
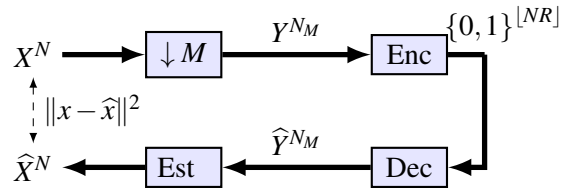

Consider the situation in which one is interested in compressing or transmitting a data sequence $X^N = (X_1,\ldots,X_N)$, generated by a source $X$, but can only access its factor $M$ decimated version
\[
Y_n = X_{Mn},\quad n = 1,\ldots, N/M.
\]

Assuming a compression or communication rate of $R$ bits per symbol to describe $X$, the encoder has at most $RN$ bits to represent $Y^{N_M}$, where $N_M\triangleq N/M$. Without loss of generality, we assume here and throughout the paper that $N_M$ is an integer. The problem of finding the $NR$ bit representation that minimizes the MSE with respect to $X^N$ is known as the \emph{indirect} (or remote) source coding problem. Classical results in source coding \cite{1057738,1054469,1056251, Kipnis2014} show that optimal compression is achieved by an \emph{estimate-and-compress} (EC) strategy: the encoder first computes $\widetilde{X}^N$, the optimal estimate of $X^N$ from its decimated version $Y^{N_M}$, and then compresses $\widetilde{X}^N$ under a MSE criterion subject to the bit constraint (Fig.~\ref{fig:ECsys}). The resultant expected MSE is called the indirect DRF of $X^N$ given $Y^{N_M}$, and we denote it here by $D_{EC}(M,R)$. As $N$ goes to infinity, $D_{EC}(M,R)$ is a function of only the bitrate $R$ and the decimation factor $M$ \cite{Kipnis2014}.
\par
%
This work focuses on the fact that, in many situations, the encoder cannot compute $\widetilde{X}^N$. This may be the result of an unknown decimation factor $M$ or a lack of computing resources. When the encoder is unable to estimate $X^N$ prior to encoding, a different scheme known as \emph{compress-and-estimate} is often employed \cite{KipnisRini2017}. In this scheme, depicted in Fig. \ref{fig:CEsys}, the observed decimated sequence $Y^{N_M}$ is encoded in an optimal manner subject to an MSE criterion. The original sequence $X^N$ is estimated at the decoder from the compressed version of $Y^{N_M}$. The distortion under this scheme is denoted as $D_{CE}(M,R)$ and provides an upper bound for the distortion under EC. That is, it bounds from above the minimal distortion in the indirect source coding problem of $X^N$ given $Y^{N_M}$ when the decimation factor $M$ is unknown at the encoder.\\

In this paper we focus on the case where the source $X$ is a standard Gaussian random walk, defined as
\begin{equation}
    \label{eq:X_def}
    X_n = \sum_{i=1}^n W_i,\quad n=1,2,\ldots,
\end{equation}
where $W_1,\ldots,W_n$ are standard normal and independent of each other. The process \eqref{eq:X_def} arises as the uniform samples of the Wiener process \cite{KipnisWiener}, or the \emph{discrete-time} Wiener process. Applications of the random walk are many, ranging from diffusion models in physics to option pricing in financial mathematics \cite{weiss1994aspects}. \par 
The main contributions of this paper are closed form expressions of the distortion functions $D_{EC}(M,R)$ and $D_{CE}(M,R)$ for the Gaussian random walk $X$ defined by \eqref{eq:X_def}. These expressions fully characterize the fundamental limit arising from representing a Gaussian random walk by quantizing its samples at a fixed bitrate, as is necessary when, for example, transmitting them over a rate-limited link. Moreover, we show that for any decimation factor $M>1$ and bitrate $R>0$, the distortion under CE is strictly sub-optimal compared to the minimal distortion achieved by EC. That is, the scheme that encodes the decimated sequence $Y$ so as to recover $Y$ with minimal distortion, attains distortion in recovering $X$ that is strictly larger than the scheme that encodes $Y$ so as to recover $X$ with minimal distortion. This result illustrates that, in problems involving inference from lossy compressed information, the optimal lossy compression procedure depends on the end inference problem. As a result, ad-hoc lossy compression techniques that do not take into account the final inference procedure are necessarily sub-optimal. Nevertheless, our results reveal that the difference between $D_{EC}(M,R)$ and $D_{CE}(M,R)$ is relatively small, and may be insignificant in many applications.\\
%
%

This paper is organized as follows. In Sec.~\ref{sec:problem} we define our general source coding problem and the EC and CE schemes. 
In Sec.~\ref{sec:background} we review relevant known results with respect to our general indirect source coding problem. In Sec.~\ref{sec:main} we characterize the distortion under the EC and CE schemes. In Sec.~\ref{sec:numerical_results} we evaluate the resulting distortion expressions numerically and derive conclusions regarding the loss of performance when the decimation factor is unknown to the encoder. Finally, we provide concluding remarks and discuss future work in Sec.~\ref{sec:conclusions}.

\begin{figure} [b]
\begin{center}
\begin{tikzpicture} [auto,>=latex]
 \node at (0,0) (source) {$X^N$} ;
 \node[int1, right of = source , node distance = 1.5cm] (decimator) {$\downarrow M$};
 \node[int1, right of = decimator , node distance = 2cm] (enc) {$\mathrm{Enc}$};
  
\draw[->,line width = 2pt] (source) -- (decimator); 
\draw[->,line width = 2pt] (decimator) -- node[above] {$Y^{N_M}$} (enc); 

\node[int1, right of = enc, node distance = 2.5cm ] (dec) {$\mathrm{Dec}$};

\node[right of = dec, node distance = 1.5cm] (dest) {$\widehat{X}^N$};

\draw[->,line width = 2pt] (enc) -- node[above, xshift = 0cm] (mes1) {$ \left\{0,1\right\}^{\lfloor N R \rfloor} $} (dec);   
\draw[->, line width=2pt] (dec) -- (dest);
\end{tikzpicture}
\end{center}
\caption{\label{fig:problem_setting} 
Decimation and source coding setting. 
}
\end{figure}
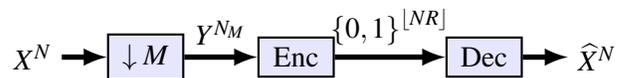

\section{Problem Formulation \label{sec:problem}}
We consider the source coding problem described in Figs.~\ref{fig:ECsys} and \ref{fig:CEsys}. In this problem, the standard Gaussian random walk $X$ of \eqref{eq:X_def} is decimated by a factor $M$ to yield the process $Y$. Note that $Y$ is also a Gaussian random walk, though with variance $M$ rather than unit variance. For a time horizon $N$, the vector $Y^{N_M}$ is encoded using the encoder 
\begin{equation}
f : \mathbb R^{N_M} \rightarrow \left\{0,1\right\}^{NR} \label{eq:encoder_def}. 
\end{equation}
The decoder $g :  \left\{0,1\right\}^{NR} \rightarrow \mathbb R^N$ receives the binary word at the output of the encoder and provides a reconstruction sequence $\widehat{X}^N = g(f(Y^{N_M}))$. The distortion is defined as the normalized MSE between $X^N$ and $\widehat{X}^N$:
\[
D_{f,g} \triangleq \frac{1}{N} \sum_{n=1}^N \mathbb E \left(X_n - \widehat{X}_n \right)^2.
\]
Our goal in this paper is to characterize $D_{f,g}$ in the limit as $N \rightarrow \infty$ under two different of encoder and decoder designs: the optimal design which has knowledge of the decimation factor at the encoder and decoder, and a suboptimal design which does not have this knowledge at the encoder.


\subsection{Optimal Source Coding via EC}
The optimal source coding performance with respect to the source coding problem of Fig.~\ref{fig:problem_setting} is defined as the minimum of $D_{f,g}$ under all pairs of encoders and decoders. Since the encoder in this problem has no direct access to the signal $X^N$ it aims to accurately represent, the characterization of this minimal distortion is an indirect source coding problem \cite[Ch. 3.5]{berger1971rate}. Classical results in source coding show that the minimum of $D_{f,g}$ is attained by the EC strategy illustrated in Fig.~\ref{fig:ECsys}. That is, the encoder first estimates $X^N$ from the observed signal $Y^{N_M}$, and then compresses this estimated version in an optimal manner as in classical source coding \cite{1057738,1056251,1054469}. 
For this reason, we set
\[
D_{EC}(M,R) \triangleq \liminf_{N\rightarrow \infty} \inf_{f,g} D_{f,g}.
\]
$D_{EC}(M,R)$ is called the indirect DRF of the process $X$ given the process $Y$ as it describes the asymptotic optimal performance in indirect source coding.

\subsection{CE Source Coding}
The CE scheme is defined by a particular sequence of encoders that generally differ from the optimal one used in EC. Specifically, the encoder $f_{CE}$ in CE is a minimum distance encoder with respect to a set of $2^{NR}$ codewords drawn from the distribution that attains the DRF of the Gaussian vector $Y^{N_M}$ at bitrate not exceeding $R$. The decoder $g_{CE}$ receives the index of the codeword $\hat{y}^{N_M}$ nearest to the input sequence and outputs $\hat{x}^N$, obtained by linearly interpolating $\hat{y}^{N_M}$ as in
\begin{equation}
    \label{eq:interp_ce}
    \widehat{x}_n = \frac{M-n}{M} \widehat{y}_{n^-} + \frac{n}{M}\widehat{y}_{n^+},\quad n=1,\ldots,N,
\end{equation}
where $n^+ = \lceil \frac{n}{M} \rceil$ and $n^- = \lfloor \frac{n}{M} \rfloor$. Note that in the CE setting, although the encoding is optimal with respect to $Y^{N_M}$, it is \textit{not} necessarily optimal with respect to $X^N$. However, the decimation factor $M$ is not used by the encoder in CE, and hence this scheme may be useful when $M$ is unknown. \par
A distortion $D$ is said to be achievable under CE if there exists a sequence of encoders of the form $f_{CE}$ such that $D_{f,g}$ converges to $D$ as $N\rightarrow \infty$. We denote by $D_{CE}(M,R)$ the infimum over all achievable distortions under CE.

\section{Background \label{sec:background}}
In this section we review relevant known results for encoding the Gaussian random walk $X^N$ of \eqref{eq:X_def}. \par
Since $X^N$ is Gaussian and Markovian, the minimal MSE (MMSE) estimate of $X^N$ from $Y^{N_M}$ is simply the interpolation of the decimated version. That is
\[
    \widetilde{X}_n \triangleq \mathbb E\left[X_n | Y^{N_M} \right] 
    = \frac{M-n}{M} Y_{n^-} + \frac{n}{M} Y_{n^+},\quad n=1,2,\ldots, 
\]
where $n^+ = \lceil \frac{n}{M} \rceil$ and $n^- = \lfloor \frac{n}{M} \rfloor$. The resulting MMSE, which we denote by $\mmse(M)$, is given by
\begin{align*}
\mmse(M) \triangleq \frac{1}{N} \sum_{n=1}^N \mathbb E \left(X_n - \widetilde{X}_n \right)^2  =\frac{M-M^{-1}}{6}. 
\end{align*}
 Note that due to the properties of conditional expectation, for any encoder $
f : \mathbb R^{N_M} \rightarrow \left\{1,\ldots,2^{NR} \right\}$ 
we have
\begin{equation}
\begin{aligned}
\mmse \left(X^N| f(Y^{N_M}) \right) & \\ =  \mmse&\left(M\right) + \mmse( \widetilde{X}^N | f(Y^{N_M})). \label{eq:decomp}
\end{aligned}
\end{equation}
Therefore, $\mmse(M)$ is a trivial lower bound to the functions $D_{EC}(M,R)$ and $D_{CE}(M,R)$. Moreover, as explained in \cite{1054469}, it follows from \eqref{eq:decomp} that the minimal distortion in estimating $X^N$ from any $NR$-bit representation of $Y^{M/N}$ is attained by the optimal encoding of $\widetilde{X}^N$ subject to this bit constraint. Hence, the EC scheme, in which the encoder first estimates $\widetilde{X}^N$ and then encodes it, is optimal. \\

Another, trivial lower bound to $D_{EC}(M,R)$ and $D_{CE}(M,R)$ is given by the (standard) DRF of the process $X$. This DRF is defined as the limit infimum as $N\rightarrow \infty$ of the normalized distortion of the Gaussian vector $X^N$. The latter is given via Kolmogorov's expression \cite{1056823} 
\begin{subequations} \label{eq:drf}
\begin{align}
    D_{X^N}(\theta) &= \frac{1}{N} \sum_{k=1}^N \min[\theta, \lambda_k]\\
    R_{X^N}(\theta) &= \frac{1}{2N} \sum_{k=1}^N \max[0, \log (\lambda_k/\theta)],
\end{align}
\end{subequations}
where $\lambda_k$'s are the eigenvalues of the covariance matrix $\Sigma_{X^N}$ of $X^N$. In our case of $X^N$ as a standard Gaussian random walk, Berger \cite{berger1970information} showed that
\begin{equation}
    \lambda_k = \Big [2 \sin \Big (\frac{2k-1}{2N+1} \frac{\pi}{2}\Big ) \Big ]^{-2}, \qquad k = 0, ..., N-1,
\end{equation}
 and concluded, upon taking the limit in \eqref{eq:drf}, that
\begin{subequations}
\label{eq:drf_int}
\begin{align}
    D_{X}(R_\theta) & = \int_0^1 \min \left\{\theta, S(\phi) \right\} d\phi \\
    R_\theta & =  \frac{1}{2} \int_0^1 \max \{0,\log \left(S(\phi)/\theta \right) \}d\phi,
    \end{align}
\end{subequations}
where $S(\phi) = \left(2 \sin \left( \pi \phi/2 \right) \right)^{-2}$ is the asymptotic density of the eigenvalues of $\Sigma_{X^N}$.

\section{Distortion under EC and CE \label{sec:main}}
We now derive our main results by characterizing the distortion under EC and CE in recovering the random walk $X$ from its decimated version $Y$. \\

\subsection{Estimate-and-Compress}
From the definition of $D_{EC}(M,R)$ and the decomposition \eqref{eq:decomp}, it follows that 
\begin{align}
    D_{EC}(M,R) &= \mmse(M) + \liminf_{N\rightarrow \infty } \inf_{f} \mmse(X^N|f(Y^{N_M})) \nonumber \\
    & = \mmse(M) +  D_{\widetilde{X}}(R), 
\label{eq:decomp_ce}
\end{align}
where $D_{\widetilde{X}}(R)$ is the DRF of the process $\widetilde{X}$. Therefore, characterizing the distortion in EC is obtained by solving a source coding problem with respect to $\widetilde{X}$. Now the process $B$ defined as $B_n \triangleq \left( X_n - \widetilde{X}_n \right)$ returns to zero at least every $M$ steps and has average variance equals to $\mmse(M)$. Hence the variance of $\widetilde{X}_n = X_n - B_n$ increases at the same rate as the variance of $X_n$, and Berger's coding theorem for $X_n$ \cite{berger1970information} can be applied to $\widetilde{X}_n$. Therefore, the DRF of $\widetilde{X}$ is given by the limiting expression for the DRF of the Gaussian vector $\widetilde{X}^N$, using Kolmogorov's expression \eqref{eq:drf} leading to the following result:
\begin{thm} \label{thm:EC_DRF}
Let 
\[
\widetilde{S}(\phi) \triangleq \left(2 \sin \left(\phi \pi/2 \right) \right)^{-2} - \frac{1-M^{-2}}{6}. 
\]
Then the indirect DRF of the random walk $X$ given its factor $M$ decimated version $Y$ equals
\begin{subequations} \label{eq:DEC_integral}
\begin{align}
    D_{EC}(M,R_\theta) &= \mmse(M) +  M\int_0^1 \min\{\theta, \widetilde{S}(\phi) \} d\phi \\
    R_\theta  &= \frac{1}{2 M} \int_0^1 \max \left\{0, \log \left(\widetilde{S}(\phi) / \theta \right) \right\} d\phi,
\end{align}
\end{subequations}
\end{thm}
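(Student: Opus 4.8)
The plan is to reduce the claim to computing the DRF $D_{\widetilde{X}}(R)$ of the interpolated process $\widetilde{X}$, since the decomposition \eqref{eq:decomp_ce} already gives $D_{EC}(M,R)=\mmse(M)+D_{\widetilde{X}}(R)$. Because $\widetilde{X}^N$ is a zero-mean Gaussian vector, its DRF is given by Kolmogorov's parametric formula \eqref{eq:drf} applied to the covariance $\Sigma_{\widetilde{X}^N}$; hence the whole problem comes down to (i) determining the eigenvalues of $\Sigma_{\widetilde{X}^N}$, and (ii) passing to the limit $N\to\infty$ while keeping track of the normalizations. The decisive structural feature is that $\widetilde{X}^N$ is a deterministic linear interpolation of the $N_M$ samples $Y^{N_M}$, so $\Sigma_{\widetilde{X}^N}$ has rank at most $N_M=N/M$ and therefore possesses only $N_M$ nonzero eigenvalues; the remaining $N-N_M$ zero eigenvalues contribute nothing to either integrand in \eqref{eq:drf}.

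First I would write $\widetilde{X}^N=L\Delta^N$, where $L$ is the $N\times N$ cumulative-sum (lower-triangular all-ones) matrix and $\Delta_n=\widetilde{X}_n-\widetilde{X}_{n-1}$ is the increment process. By the interpolation formula, $\Delta$ is piecewise constant on blocks of length $M$: on the $j$-th block it equals $(Y_j-Y_{j-1})/M$, an i.i.d.\ $N(0,1/M)$ variable. Consequently $\Sigma_{\Delta}=PP^\top$ with $P=M^{-1/2}(I_{N_M}\otimes\mathbf 1_M)$, and $\Sigma_{\widetilde{X}^N}=(LP)(LP)^\top$. The nonzero eigenvalues of $\Sigma_{\widetilde{X}^N}$ therefore coincide with the eigenvalues of the reduced $N_M\times N_M$ Gram matrix $G\triangleq P^\top L^\top L P$, which folds the interpolation back onto the sample grid. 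The key computation is to evaluate the block sums defining $G$ exactly; I expect this to yield a clean identity of the form $G=M^2\,\Sigma_{Z^{N_M}}-\tfrac{M^2-1}{6}I+(\text{rank-one})$, where $\Sigma_{Z^{N_M}}$ is the covariance of a \emph{standard} Gaussian random walk of length $N_M$ (so that the size-$N_M$ matrix $L_{N_M}^\top L_{N_M}$ and $\Sigma_{Z^{N_M}}$ share the same spectrum). This is exactly the point at which the factor $M^2$ and the additive shift $-\tfrac{M^2-1}{6}$ that define $\widetilde{S}$ enter.

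Given this identity, I would invoke Berger's eigenvalue asymptotics \cite{berger1970information} for $\Sigma_{Z^{N_M}}$: its ordered eigenvalues have the limiting density $S(\psi)=(2\sin(\pi\psi/2))^{-2}$, $\psi\in(0,1)$. The scalar shift $-\tfrac{M^2-1}{6}I$ translates every eigenvalue rigidly, while the rank-one term perturbs at most one eigenvalue and is asymptotically negligible for the empirical spectral distribution (by Cauchy interlacing). Hence the $k$-th nonzero eigenvalue of $\Sigma_{\widetilde{X}^N}$ satisfies $\tilde\lambda_k\to M^2\big(S(\psi)-\tfrac{1-M^{-2}}{6}\big)=M^2\,\widetilde{S}(\psi)$ at $\psi=k/N_M$. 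Substituting into \eqref{eq:drf}, restricting the sums to the $N_M$ nonzero eigenvalues, and using $N=MN_M$ to split $\tfrac1N=\tfrac1M\cdot\tfrac1{N_M}$, the Riemann sums converge to $\tfrac1M\int_0^1\min\{\theta,M^2\widetilde{S}(\psi)\}\,d\psi$ and $\tfrac1{2M}\int_0^1\max\{0,\log(M^2\widetilde{S}(\psi)/\theta)\}\,d\psi$. Pulling the factor $M^2$ out of the $\min$ and the $\log$ and relabelling the water level $\theta\mapsto\theta/M^2$ (a harmless reparametrization of the same parametric curve) produces precisely \eqref{eq:DEC_integral} after adding $\mmse(M)$.

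The main obstacle I anticipate is the spectral limit itself: establishing that the perturbed matrix $G$ has a bona fide limiting eigenvalue density and that this density is the rigidly shifted, $M^2$-scaled Berger density. The exact scalar shift is reassuring because it acts uniformly, but one must still justify (a) that applying Berger's asymptotics to $L_{N_M}^\top L_{N_M}$ survives the block folding, i.e.\ that the $O(1/N_M)$ boundary corrections in the block sums do not affect the limit, and (b) that the rank-one term genuinely drops out, so that the convergence $\tilde\lambda_{\lceil\psi N_M\rceil}\to M^2\widetilde{S}(\psi)$ holds for almost every $\psi$ and is strong enough to pass through the bounded, continuous integrands $\min\{\theta,\cdot\}$ and $\max\{0,\log(\cdot/\theta)\}$. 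The remaining bookkeeping --- the rank deficiency, the $1/M$ normalization, and the $\theta$-rescaling --- is routine once the spectral statement is in hand.
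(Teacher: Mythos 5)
Your proposal is correct, and while it shares the paper's forced outer scaffolding---the reduction $D_{EC}(M,R)=\mmse(M)+D_{\widetilde{X}}(R)$ via \eqref{eq:decomp_ce}, Kolmogorov's formula \eqref{eq:drf}, and the Riemann-sum limit with the $1/M$ normalization (your relabelling $\theta\mapsto\theta/M^2$ is the consistent one; the paper's stated $\theta'=\theta/M$ is a harmless slip, since any monotone reparametrization traces the same $(R,D)$ curve)---it handles the central step, the spectrum of $\Sigma_{\widetilde{X}^N}$, by a genuinely different and in fact more rigorous route. The paper follows Berger's method: it argues the eigenvectors are piecewise linear, posits them as linear interpolations of the decimated Berger eigenvectors, and extracts \eqref{eq:tild_lam} from the second-difference equation at grid points; as presented this is a sketch, and \eqref{eq:tild_lam} is only asymptotic, not exact at finite $N$ (for $M=2$, $N=2$ one has $\Sigma_{\widetilde{X}^2}=\bigl(\begin{smallmatrix}1/2&1\\1&2\end{smallmatrix}\bigr)$ with unique nonzero eigenvalue $5/2$, whereas \eqref{eq:tild_lam} gives $\approx 2.39$). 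Your Gram-matrix reduction replaces the ansatz with an exact identity: with $P=M^{-1/2}(I_{N_M}\otimes\mathbf{1}_M)$ one can verify directly that $G=P^\top L^\top L P=M^2 L_{N_M}^\top L_{N_M}-\tfrac{M^2-1}{6}I-\tfrac{M^2-M}{2}\mathbf{1}\mathbf{1}^\top$, which is precisely the form you predicted; the one imprecision is that relative to $\Sigma_{Z^{N_M}}$ (entries $\min(j,k)$) in its natural ordering the perturbation has rank three rather than one, but conjugating by the index-reversal permutation (which fixes $I$ and $\mathbf{1}\mathbf{1}^\top$ and maps $L_{N_M}^\top L_{N_M}$ to $\Sigma_{Z^{N_M}}$) makes it exactly rank one, and in any case bounded-rank suffices for your interlacing argument. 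Since Berger's eigenvalues for the length-$N_M$ walk are exact, interlacing plus the usual integrable-singularity control of the $\log$ integrand near $\phi=0$ (the same issue Berger's own proof must address, and which your step (b) flags) yields the limiting density $M^2\widetilde{S}$ rigorously. In short: your approach buys an exact finite-$N$ spectral identity and a clean perturbation argument where the paper offers a heuristic eigenvector ansatz; the paper's approach, in exchange, exhibits the (approximate) eigenvectors explicitly, which your method never constructs.
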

\begin{proof}
We show in the Appendix that the $N_M$ non-zero eigenvalues of the covariance matrix of $\widetilde{X}^N$ are given by
\begin{equation} \label{eq:tild_lam}
    \begin{split}
        \widetilde{\lambda}_k(M) &= M^2\Big[2\sin \Big( \frac{(2k-1)M}{2N+1}\frac{\pi}{2} \Big) \Big]^{-2} - \frac{M^2-1}{6},\\
        &\qquad k = 0...N_M-1
    \end{split}
\end{equation}
Substituting \eqref{eq:tild_lam} into \eqref{eq:drf} we have
\begin{subequations} 
\begin{align}
    \label{eq:DEC}
    D_{\widetilde{X}^N}(R_\theta) &=  \frac{1}{M} \frac{M}{N} \sum_{k=1}^{N_M} \min[\theta, \widetilde{\lambda}_k] \\
    R_\theta  &= \frac{1}{M} \frac{M}{2N} \sum_{k=1}^{N_M} \max[0, \log (\widetilde{\lambda}_k/\theta)],
\end{align}
\end{subequations}
where the $\widetilde{\lambda}_k$'s are given by \eqref{eq:tild_lam}. Taking the limit in \eqref{eq:DEC} as $N\rightarrow \infty$ with $kM/N\rightarrow \phi \in (0,1)$ and $\theta' = \theta/M$ leads to the integral representation for $D_{\widetilde{X}}(R)$. Finally,  \eqref{eq:DEC_integral} is obtained by adding the MMSE term to $D_{\widetilde{X}}(R)$.
\end{proof}

\subsection{Compress-and-Estimate}
We now consider the compress-and-estimate scheme. As in EC, we begin from the decomposition in \eqref{eq:decomp}. However, instead of using the optimal encoder that attains the DRF of $\widetilde{X}$, we use the encoder $f_{CE}$ that maps $Y^{N_M}$ to one of $2^{NR}$ possible sequences $\widehat{y}^{N_M}(1),\ldots,\widehat{y}^{N_M}(2^{NR})$. By linearity of $\widetilde{X}^N$ in $Y^{N_M}$, we have that the MMSE estimate of $X^N$ from $f_{EC}(Y^{N_M})$ is given by the interpolation \eqref{eq:interp_ce}, hence
\[
\mmse\left(X^N | \widehat{Y}^{N_M} \right) = \mmse \left(X^N | f_{CE} \left(Y^{N_M} \right) \right),
\]
where $\widehat{Y}^{N_M} = g_{EC} \left(f_{CE}( Y^{N_M})\right)$. Therefore, in order to derive $D_{CE}$ via \eqref{eq:decomp}, it is left to characterize the term $\mmse (\widetilde{X}^N | f_{CE}(Y^{N_M}))$. 
Note that unlike in EC, this term does not describe a distortion under optimal encoding, since while optimal encoding was performed, it was performed with respect to $Y^{N_M}$ rather that $\widetilde{X}^N$. Therefore, we characterize $\mmse (\widetilde{X}^N | f_{CE}(Y^{N_M}))$ by expressing it in terms of the error in encoding $Y^{N_M}$ with respect to the CE codebook:
\[
\epsilon \triangleq Y^{N_M} - \hat{Y}\left(f_{CE}(Y^{N_M}) \right),
\]
This connection is achieved by the following lemma:
\begin{lem} \label{lem:mse_connection}
For any $N$, $M$, and encoder $f$ we have:
\begin{equation} \label{eq:DCE_oper}
\begin{split}
    & \mmse(\widetilde{X}^N|\widehat Y^{N_M} ) = \frac{2M^2+1}{3NM} \sum_{n=1}^{N_M+1} \mathbb E[\epsilon_n^2] \\
    & + \frac{M^2-1}{3NM} \sum_{n=1}^{N_M} \mathbb E[\epsilon_n \epsilon_{n+1}]  - \frac{2M^2+3M+1}{6NM} \mathbb E[\epsilon_{N_M}^2].
\end{split}
\end{equation}
\end{lem}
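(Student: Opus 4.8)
The plan is to turn the statement into a finite, deterministic computation, because the only stochastic input is the second-order statistics of the error $\epsilon$. The key structural observation is that $\widetilde{X}^N$ and the CE reconstruction $\widehat{X}^N$ are produced from $Y^{N_M}$ and from $\widehat{Y}^{N_M}$, respectively, by the \emph{same} linear interpolation \eqref{eq:interp_ce}. Denoting that interpolation by a fixed matrix $A$, so that $\widetilde{X}^N = AY^{N_M}$ and $\widehat{X}^N = A\widehat{Y}^{N_M}$, subtraction gives $\widetilde{X}^N-\widehat{X}^N = A\epsilon$; that is, for every $n$,
\[
\widetilde{X}_n - \widehat{X}_n = \tfrac{M-n}{M}\,\epsilon_{n^-} + \tfrac{n}{M}\,\epsilon_{n^+}.
\]
Hence $\mmse(\widetilde{X}^N\mid\widehat{Y}^{N_M}) = \tfrac1N\sum_{n=1}^N \mathbb E(\widetilde{X}_n-\widehat{X}_n)^2$ is simply a quadratic form in the entries of $\epsilon$, and \eqref{eq:DCE_oper} is its explicit evaluation. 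This is also why the identity is claimed for \emph{any} encoder $f$: no optimality of $f_{CE}$ enters, only linearity of the interpolation map.

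Next I would expand the square and regroup the $N$ summands into the $N_M$ decimation blocks, the $j$-th block being the indices $n=Mj+r$, $r=1,\dots,M$, on which the interpolation places weight $\tfrac{M-r}{M}$ on $\epsilon_j$ and weight $\tfrac{r}{M}$ on $\epsilon_{j+1}$. The block contribution is $\sum_{r=1}^M\bigl(\tfrac{M-r}{M}\epsilon_j+\tfrac{r}{M}\epsilon_{j+1}\bigr)^2$, so the whole computation reduces to the three elementary power sums $\sum_{r=1}^M r^2=\tfrac{M(M+1)(2M+1)}{6}$, $\sum_{r=1}^M(M-r)^2=\tfrac{(M-1)M(2M-1)}{6}$, and $\sum_{r=1}^M r(M-r)=\tfrac{M(M^2-1)}{6}$. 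After dividing by $M^2$, the right endpoint $\epsilon_{j+1}^2$ of a block carries weight $\tfrac{(M+1)(2M+1)}{6M}$, the left endpoint $\epsilon_j^2$ carries weight $\tfrac{(M-1)(2M-1)}{6M}$, and the cross term $\epsilon_j\epsilon_{j+1}$ carries weight $\tfrac{M^2-1}{3M}$.

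Summing over $j=0,\dots,N_M-1$ and collecting like terms then gives the bulk of \eqref{eq:DCE_oper}: an interior index $k$ occurs once as a right endpoint (of block $k-1$) and once as a left endpoint (of block $k$), so its total $\epsilon_k^2$ weight is $\tfrac{(M+1)(2M+1)+(M-1)(2M-1)}{6M}=\tfrac{2M^2+1}{3M}$, while each cross pair $\epsilon_k\epsilon_{k+1}$ occurs exactly once, with weight $\tfrac{M^2-1}{3M}$. Dividing by $N$ reproduces the prefactors $\tfrac{2M^2+1}{3NM}$ and $\tfrac{M^2-1}{3NM}$ of the two main sums, and taking expectations term by term converts the deterministic quadratic form into the $\mathbb E[\epsilon_n^2]$ and $\mathbb E[\epsilon_n\epsilon_{n+1}]$ sums appearing in the decomposition \eqref{eq:decomp}.

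The step I expect to be the only genuine obstacle is the boundary bookkeeping, where the interior double-counting breaks down at the two ends. At the left end the index $0$ occurs solely as a left endpoint, but it contributes nothing because $\epsilon_0 = Y_0-\widehat{Y}_0 = 0$: the walk is pinned at the origin ($X_0=0$ is known at the decoder), so the $(0,1)$ cross term drops as well. At the right end the terminal index $N_M$ (the sample $Y_{N_M}=X_N$) occurs only as a right endpoint of the last block and is therefore missing the weight it would otherwise pick up as a left endpoint. The compact way \eqref{eq:DCE_oper} is written — extending the square sum to $N_M+1$ under the convention that terms beyond the data vanish, and then subtracting a single endpoint correction proportional to $\mathbb E[\epsilon_{N_M}^2]$ — is precisely the device that absorbs this deficit. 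Verifying that this one correction, together with $\epsilon_0=0$, reconciles the per-block tally with the stated closed form is the delicate part of the argument; everything else is routine algebra.
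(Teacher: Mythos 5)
Your approach is the same as the paper's --- expand $\widetilde X_n - \widehat X_n$ as the interpolation of the error vector $\epsilon$, group the $N$ terms into decimation blocks, evaluate the three elementary power sums, and collect endpoint weights --- and your interior coefficient $\frac{2M^2+1}{3M}$, your cross coefficient $\frac{M^2-1}{3M}$, and the use of $\epsilon_0=0$ at the left boundary are all correct. The problem is exactly the step you defer as ``the delicate part'': it fails. Under your $1$-indexed blocks $n=Mj+r$, $r=1,\dots,M$, the terminal index $N_M$ enters as a \emph{right} endpoint, i.e.\ with the large weight $\frac{(M+1)(2M+1)}{6M}=\frac{2M^2+3M+1}{6M}$, so (as your own tally says) its deficit against the interior weight $\frac{2M^2+1}{3M}$ is the missing \emph{left}-endpoint weight $\frac{(M-1)(2M-1)}{6M}=\frac{2M^2-3M+1}{6M}$. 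Carried to completion, your computation therefore ends with the correction term $-\frac{2M^2-3M+1}{6NM}\,\mathbb E[\epsilon_{N_M}^2]$, whereas \eqref{eq:DCE_oper} has $-\frac{2M^2+3M+1}{6NM}\,\mathbb E[\epsilon_{N_M}^2]$; the two differ by exactly $\mathbb E[\epsilon_{N_M}^2]/N$. So the assertion that the stated correction ``precisely absorbs this deficit'' is false in your setup, and no bookkeeping will make it true.

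The discrepancy is one of summation range, and it is worth seeing why, because the paper itself is loose here. The paper's appendix computes the $0$-indexed average $\frac1N\sum_{n=0}^{N-1}$ with blocks $m=nM,\dots,(n+1)M-1$: there each \emph{left} endpoint carries the large weight, the pinned sample $n=0$ is included (contributing nothing), and the final sample $n=N$ --- whose error is exactly $\epsilon_{N_M}$, since $\widetilde X_N=Y_{N_M}$ and $\widehat X_N=\widehat Y_{N_M}$ --- is \emph{excluded}. In that convention the terminal index appears only with the small weight $\frac{2M^2-3M+1}{6M}$, its deficit is indeed $\frac{2M^2+3M+1}{6M}$, and \eqref{eq:DCE_oper} is exact; the paper's closing ``re-indexing'' remark glosses over the fact that this $0$-indexed average and the main text's $1$-indexed $\mmse(\widetilde X^N|\widehat Y^{N_M})=\frac1N\sum_{n=1}^{N}\mathbb E(\widetilde X_n-\widehat X_n)^2$ differ by the same $\mathbb E[\epsilon_{N_M}^2]/N$. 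In short, your derivation, done honestly, proves a correct identity that is not literally \eqref{eq:DCE_oper}; to land on \eqref{eq:DCE_oper} verbatim you must sum over $n=0,\dots,N-1$ as the appendix does. The mismatch is $O(1/N)$ and vanishes in the limit used in Thm.~\ref{thm:CE_DRF} (the paper itself notes $\mathbb E[\epsilon_{N_M}^2]/N\to 0$), so nothing downstream is affected, but as a proof of the lemma as stated your final step does not close.
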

The proof of Lem.~\ref{lem:mse_connection} can be found in the Appendix. \\

Using Lem.~\ref{lem:mse_connection} with the encoder  $f_{CE}$, we obtain a closed-form expression for $D_{CE}(M,R)$, as per the following theorem: 
\begin{thm} \label{thm:CE_DRF}
For any decimation factor $M$ and bitrate $R$, the infimum over all acheivable distortions using the CE scheme is given by
\begin{subequations}
\label{eq:CE_DRF}
\begin{align}
    D_{CE}(M,R_\theta) & = \mmse(M) + \frac{2M^2+1}{3M} \int_0^1 \min \left\{S(\phi),\theta\right\} d\phi \nonumber \\
    & + \frac{M^2-1}{3M} \int_0^1 \min \left\{S(\phi),\theta \right\} \cos(\pi\phi) d\phi \label{eq:CE_DRF_D}, \\
    R_\theta  = & \frac{1}{2M} \int_0^1 \max \left[0, \log \left(S(\phi) /\theta \right) \right] d\phi, \label{eq:CE_DRF_R}
\end{align}
\end{subequations}
\end{thm}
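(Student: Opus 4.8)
The plan is to combine the decomposition \eqref{eq:decomp} with Lemma~\ref{lem:mse_connection} so as to reduce the computation of $D_{CE}(M,R)$ to the asymptotic second-order statistics of the quantization error $\epsilon = Y^{N_M} - \widehat Y^{N_M}$ produced by the optimal $Y$-code. Since $D_{CE}(M,R) = \mmse(M) + \lim_{N\to\infty}\mmse(\widetilde X^N\mid f_{CE}(Y^{N_M}))$ and, as noted before the statement, $\mmse(\widetilde X^N\mid f_{CE}(Y^{N_M}))$ equals the left-hand side of \eqref{eq:DCE_oper}, it suffices to evaluate the two limits $\tfrac1N\sum_n \mathbb E[\epsilon_n^2]$ and $\tfrac1N\sum_n\mathbb E[\epsilon_n\epsilon_{n+1}]$. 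The isolated boundary term $\tfrac1N \mathbb E[\epsilon_{N_M}^2]$ vanishes, since each error variance is bounded by the water level $\theta_Y$ below.

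Next I would identify the covariance of $\epsilon$. Because $Y^{N_M}$ is a zero-mean Gaussian vector and $f_{CE}$ draws its codewords from the distribution attaining the DRF of $Y^{N_M}$ at per-symbol rate $\bar R = NR/N_M = MR$, the error covariance is asymptotically diagonal in the Karhunen--Lo\`eve basis of $\Sigma_{Y^{N_M}}$, with $k$-th diagonal entry the reverse-waterfilling level $\min\{\mu_k,\theta_Y\}$, where $\mu_k$ are the eigenvalues of $\Sigma_{Y^{N_M}}$ and $\theta_Y$ is the water level. Writing $\Sigma_{Y^{N_M}}$ as $M$ times the covariance of a unit-variance walk of length $N_M$, Berger's formula gives $\mu_k = M[2\sin(\tfrac{2k-1}{2N_M+1}\tfrac\pi2)]^{-2}\to M\,S(\phi)$ as $k/N_M\to\phi$, with eigenvectors the sinusoids $v_k(n)\propto\sin(\omega_k n)$, $\omega_k=\tfrac{(2k-1)\pi}{2N_M+1}\to\pi\phi$. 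Setting $\theta_Y=M\theta$ turns the waterfilling level into $\min\{\mu_k,\theta_Y\}=M\min\{S(\phi),\theta\}$, and substituting $\mu_k$ into Kolmogorov's expression \eqref{eq:drf} while accounting for $R=\bar R/M$ reproduces the rate relation \eqref{eq:CE_DRF_R}.

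The crux is then to convert these eigen-domain quantities back to the space domain. With $\Sigma_\epsilon = U\,\mathrm{diag}(\min\{\mu_k,\theta_Y\})\,U^\top$, the diagonal sum is the trace, $\tfrac1{N_M}\sum_n\mathbb E[\epsilon_n^2]\to\int_0^1 M\min\{S(\phi),\theta\}\,d\phi$, whereas the first off-diagonal requires the sinusoidal structure of the eigenvectors. Using the product-to-sum identity, $\sum_n v_k(n)v_k(n+1)$ leaves a $\tfrac12\cos(\omega_k)$ contribution while the oscillatory remainder is negligible after normalization, so $\sum_n v_k(n)v_k(n+1)\to\cos(\omega_k)$ and hence $\tfrac1{N_M}\sum_n\mathbb E[\epsilon_n\epsilon_{n+1}]\to\int_0^1 M\min\{S(\phi),\theta\}\cos(\pi\phi)\,d\phi$. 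This $\cos(\pi\phi)$ weighting is the key new ingredient and the main obstacle, since it is precisely what will separate $D_{CE}$ from $D_{EC}$.

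Finally I would substitute these limits into \eqref{eq:DCE_oper}. Using $\tfrac1N=\tfrac1{MN_M}$, the factor $M$ carried by the eigenvalues cancels, giving $\tfrac1N\sum_n\mathbb E[\epsilon_n^2]\to\int_0^1\min\{S,\theta\}\,d\phi$ and $\tfrac1N\sum_n\mathbb E[\epsilon_n\epsilon_{n+1}]\to\int_0^1\min\{S,\theta\}\cos(\pi\phi)\,d\phi$; the prefactors $\tfrac{2M^2+1}{3M}$ and $\tfrac{M^2-1}{3M}$ then appear directly, and adding $\mmse(M)$ yields \eqref{eq:CE_DRF_D}. The one point that still needs care—which I would either invoke from standard quadratic-Gaussian rate-distortion theory or defer to the appendix—is the justification that the per-letter second moments, and in particular the nearest-neighbor correlations $\mathbb E[\epsilon_n\epsilon_{n+1}]$, of the minimum-distance encoding error concentrate on their test-channel values as $N\to\infty$; granting this, the infimum over achievable CE distortions equals the displayed expression.
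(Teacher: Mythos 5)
Your proposal is correct and follows essentially the same route as the paper's own proof: the decomposition \eqref{eq:decomp} plus Lemma~\ref{lem:mse_connection}, the identification of the error covariance with the reverse-waterfilling test channel of $Y^{N_M}$ (the concentration point you flag is exactly what the paper handles by citing Gallager and Kontoyiannis--Zhang), and the product-to-sum evaluation of $\sum_n u_k[n]u_k[n+1]$ with Berger's sinusoidal eigenvectors to produce the $\cos(\pi\phi)$ weighting. The only cosmetic difference is that you phrase the diagonal term as a trace of $\Sigma_\epsilon$ while the paper writes it as $\mmse(Y^{N_M}|f_{CE}(Y^{N_M}))\to D_Y(MR)$; these are the same computation.
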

where $S(\phi) = \left(2 \sin \left( \pi \phi/2 \right) \right)^{-2}$, as in \eqref{eq:drf_int}. \\

\begin{proof}
Only a sketch of the proof is provided here. The full proof can be found in the Appendix. In view of \eqref{eq:decomp}, it is enough to show that $\mmse \left(\widetilde{X}^N|f_{CE}(Y^{N_M}) \right)$ converges to the water-filling part in \eqref{eq:CE_DRF}. Using Lem.~\ref{lem:mse_connection} with $f_{CE}$ implies that the first term in the RHS of \eqref{eq:DCE_oper} converges to 
$\frac{2M^2+1}{3M}D_X(MR)$, and leads to the first term in \eqref{eq:CE_DRF_D}. In order to evaluate the term $\mathbb E[\epsilon_n \epsilon_{n+1}]$ in \eqref{eq:DCE_oper}, we consider the properties of the encoder $f_{CE}$. The joint distribution of the two sequences $Y^{N_M}$ and $\hat{Y}^{N_M}$, at the input and output of the encoder, respectively, behaves as if both sequences were drawn from the joint $P^*_{Y^{N_M},\hat{Y}^{N_M}}$ that attains the DRF of the vector $Y^{N_M}$ \cite{gallager1968information, kontoyiannis2006mismatched}. In our case, this distribution is defined by a Gaussian channel $P_{\hat{Y}^{N_M},Y^{N_M}}$. Therefore, by setting $\hat{\epsilon} \triangleq Y - \widehat{Y}$, we conclude that 
\[
 \mathbb E \left[ \epsilon_n \epsilon_{n+1}\right] = \mathbb E \left[ \hat{\epsilon}_n\hat{\epsilon}_{n+1}\right]
 = \sum_{k=1}^{N_M} u_k[n] u_k[n+1], 
\]
where $u_k$'s are the eigenvectors of covarience matrix $\Sigma_{Y^{N_M}}$, given in \cite{berger1970information}. The behavior of the last term in the limit $N\rightarrow \infty$ leads to the second term in \eqref{eq:CE_DRF_D}.
\end{proof}

\section{Analysis and Interpretations}\label{sec:numerical_results}

Since the parameter $\theta$ obscures the direct dependency of $D_{EC}$ and $D_{CE}$ on $R$, we will consider the conditions under which we can eliminate the parameter $\theta$.  We will then numerically analyze the dual dependency of $D_{EC}$ and $D_{CE}$ and $M$ and $R$.

\subsection{High Rate Characterizations}
When the number of bits per decimated symbol $MR$ is large, $\theta$ can be eliminated from \eqref{eq:DEC_integral} and \eqref{eq:CE_DRF}, leading to single-line expressions for $D_{EC}(M,R)$ and $D_{CE}(M,R)$. This leads to the following proposition.
\begin{prop} \mbox{}
\label{prop:properties}
\begin{itemize}
    \item[(i)] For $R M \geq 1$,
    \begin{equation}
        \label{eq:DCE_high_rate}
        D_{CE}(M,R) = \mmse(M) +  \frac{2M^2+1}{3M} 2^{-2MR}
    \end{equation}

\item [(ii)] For $ M R \geq 
\log2 \left[1 + 3/\sqrt{3+\frac{6}{M^2}} \right] \geq 1$,
    \begin{align}
        D_{EC}(M,R) & =\mmse(M)  \label{eq:DEC_high_rate} \\
        &\quad + \frac{1+\left(2+\sqrt{3+\frac{6}{M^2}}\right) M^2}{6 M} 2^{-2 M R} \nonumber
    \end{align}
\end{itemize}
\end{prop}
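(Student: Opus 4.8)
The plan is to exploit the reverse-waterfilling structure of the parametric formulas \eqref{eq:DEC_integral} and \eqref{eq:CE_DRF}, together with the observation that in the high-rate regime the waterfilling level $\theta$ drops below the entire spectral density, so that the $\min$ and $\max$ operations become trivial. First I would record the monotonicity of the densities: $S(\phi) = (2\sin(\pi\phi/2))^{-2}$ is strictly decreasing on $(0,1]$ and attains its minimum at $\phi=1$ with $S(1)=1/4$, while $\widetilde{S}(\phi) = S(\phi) - \tfrac{1-M^{-2}}{6}$ is likewise decreasing with minimum $\widetilde{S}(1) = \tfrac{1}{12} + \tfrac{1}{6M^2}$. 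Once $\theta$ lies at or below these minima, every $\min\{\theta,\cdot\}$ in the distortion integral collapses to $\theta$ and every $\max\{0,\log(\cdot/\theta)\}$ in the rate integral collapses to $\log(\cdot/\theta)$, which removes the piecewise behavior and makes both integrals elementary.

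For part (i), in this regime $\int_0^1 \min\{S(\phi),\theta\}\,d\phi = \theta$, and the cosine-weighted term vanishes since $\int_0^1 \theta\cos(\pi\phi)\,d\phi = 0$; hence $D_{CE} = \mmse(M) + \tfrac{2M^2+1}{3M}\theta$. For the rate I would use the classical log-sine integral $\int_0^{\pi/2}\log\sin u\,du = -\tfrac{\pi}{2}\log 2$, which after a change of variables gives $\int_0^1 \log S(\phi)\,d\phi = 0$, so \eqref{eq:CE_DRF_R} reduces to $R_\theta = -\tfrac{1}{2M}\log_2\theta$, i.e. $\theta = 2^{-2MR}$. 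Substituting recovers \eqref{eq:DCE_high_rate}, and the validity condition is exactly $\theta = 2^{-2MR}\le S(1) = 1/4$, equivalently $RM\ge 1$.

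For part (ii) the distortion integral again collapses to $M\theta$, so $D_{EC} = \mmse(M) + M\theta$; the substantive step is the rate integral $\int_0^1 \log\widetilde{S}(\phi)\,d\phi$. I would write $\widetilde{S}(\phi) = \tfrac{1 - 4c\sin^2(\pi\phi/2)}{4\sin^2(\pi\phi/2)}$ with $c = \tfrac{1-M^{-2}}{6}$, split the logarithm, and cancel the $\log(4\sin^2)$ part against the log-sine integral, leaving $\int_0^1 \log(1 - 4c\sin^2(\pi\phi/2))\,d\phi$. Rewriting $1 - 4c\sin^2 u = \cos^2 u + (1-4c)\sin^2 u$ and invoking the standard identity $\int_0^{\pi/2}\log(a^2\cos^2 u + b^2\sin^2 u)\,du = \pi\log\tfrac{a+b}{2}$ with $a=1$, $b = \sqrt{1-4c}$, this evaluates to $2\log\tfrac{1+\sqrt{1-4c}}{2}$. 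Since $\sqrt{1-4c} = \tfrac{1}{3}\sqrt{3 + 6/M^2}$, inverting for $\theta$ and inserting into $M\theta$ yields, after algebra, the coefficient $\tfrac{1 + (2+\sqrt{3+6/M^2})M^2}{6M}$ multiplying $2^{-2MR}$, as in \eqref{eq:DEC_high_rate}. The threshold comes from setting $\theta = \widetilde{S}(1)$; with $q=\sqrt{3+6/M^2}$ one has $\widetilde{S}(1) = q^2/36$, and the constants $\log 36 = 2\log 2 + 2\log 3$ cancel, leaving $MR = \log_2(1 + 3/q)$, which matches the stated bound, while $\log_2(1 + 3/q)\ge 1$ follows by a short monotonicity check in $M$.

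The main obstacle is the evaluation of $\int_0^1\log\widetilde{S}(\phi)\,d\phi$ for the EC case: unlike the CE rate integral, which is pure $\log S$ and integrates to zero, here the constant shift prevents the logarithm from separating, so one must recognize the $\cos^2/\sin^2$ form and apply the closed-form log-integral. The remaining effort is purely bookkeeping to match the coefficient and the threshold to the forms stated in \eqref{eq:DEC_high_rate}.
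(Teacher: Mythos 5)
Your proposal is correct and follows essentially the same route as the paper's proof: in the high-rate regime the waterfilling level $\theta$ drops below the minimum of the spectral density, the $\min$/$\max$ operations collapse, the rate integrals are evaluated in closed form via the log-sine identity, and $\theta$ is then eliminated to get the stated coefficients and thresholds. If anything, your write-up is more careful than the paper's appendix, which contains sign/constant typos (e.g., it states the minimum of $\widetilde{S}$ as $\tfrac{1+M^{-2}}{12}$ rather than $\tfrac{1+2M^{-2}}{12}$, and derives a threshold $\log\frac{\sqrt{3}M+\sqrt{5M^2-2}}{\sqrt{2+M^2}}$ inconsistent with the proposition), whereas your algebra reproduces exactly the threshold $MR \ge \log_2\left(1+3/\sqrt{3+6/M^2}\right)$ and coefficient appearing in the statement.
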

~\\
The proof of Prop.~\ref{prop:properties} can be found in the Appendix. Note that, as expected, \eqref{eq:DEC_high_rate} and \eqref{eq:DCE_high_rate} reduce to $D_X(R)$ for $M=1$. It follows from Prop.~\ref{prop:properties} that
\begin{equation}
    \label{eq:diff}
    \begin{aligned}
D_{CE}(M,R) -& D_{EC}(M,R)  \\ &=  \frac{1+\left(2-\sqrt{3+\frac{6}{M^2}}\right) M^2}{6 M} 2^{-2 M R},
\end{aligned}
\end{equation}
whenever $M R\geq 
\log2 \left[1 + 3/\sqrt{3+\frac{6}{M^2}} \right]$.

\subsection{Discussion}

\begin{figure} [t]
    \centering
    \includegraphics[width=0.9\linewidth]{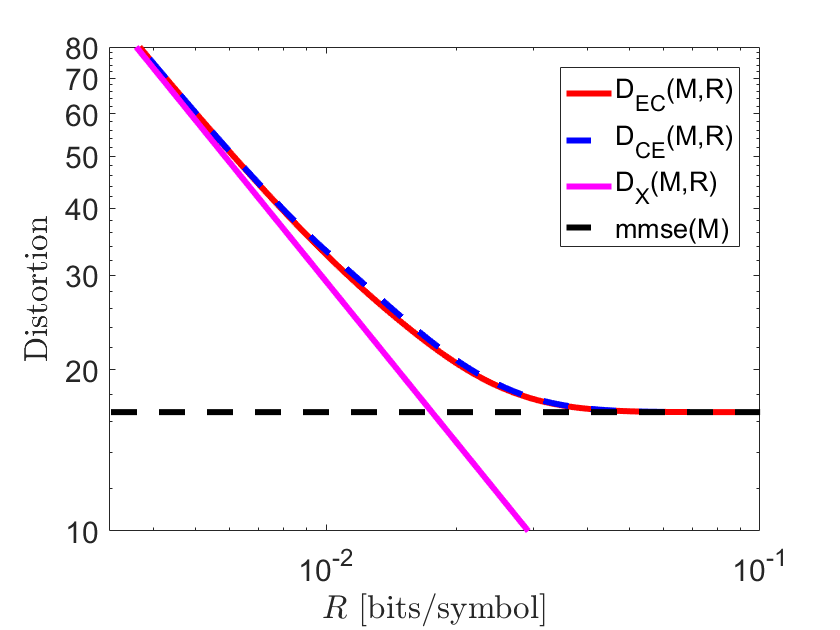}
    \includegraphics[width=0.9\linewidth]{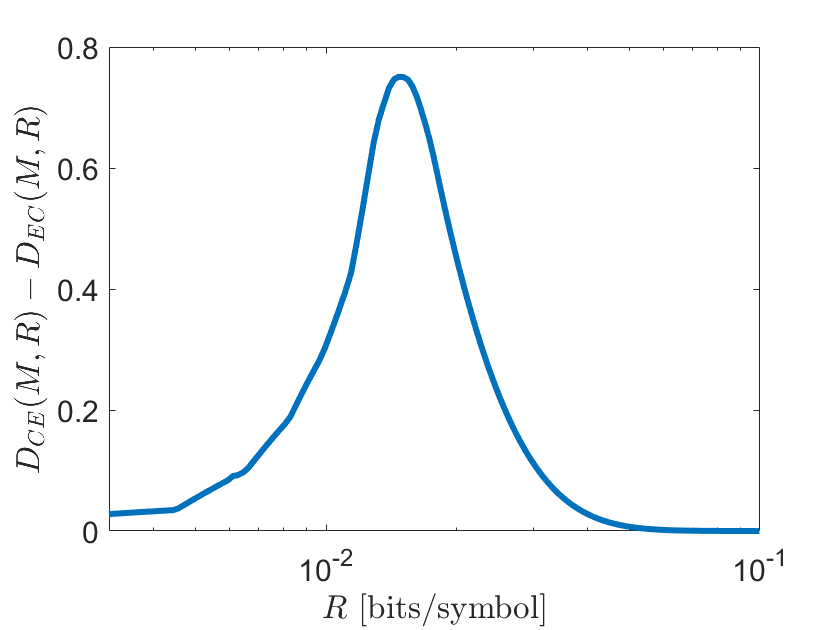}
    \caption{Distortion as a function of the bitrate $R$ for a fixed downsampling factor $M = 100$. The bottom figure shows the difference $D_{CE}(M,R) - D_{EC}(M,R)$. }
    \label{fig:D_vs_R}
\end{figure} 

\begin{figure} [t]
    \centering
    \includegraphics[width=0.9\linewidth]{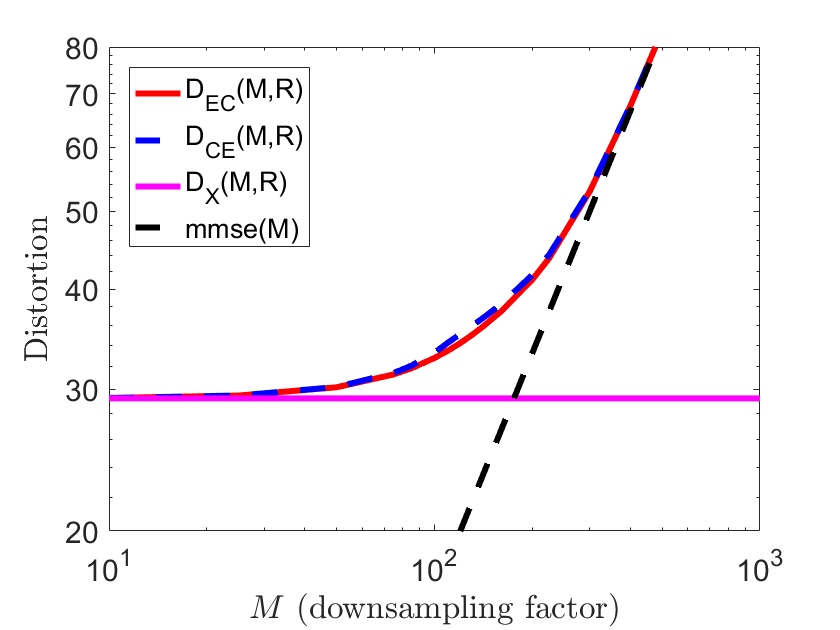}
    \includegraphics[width=0.9\linewidth]{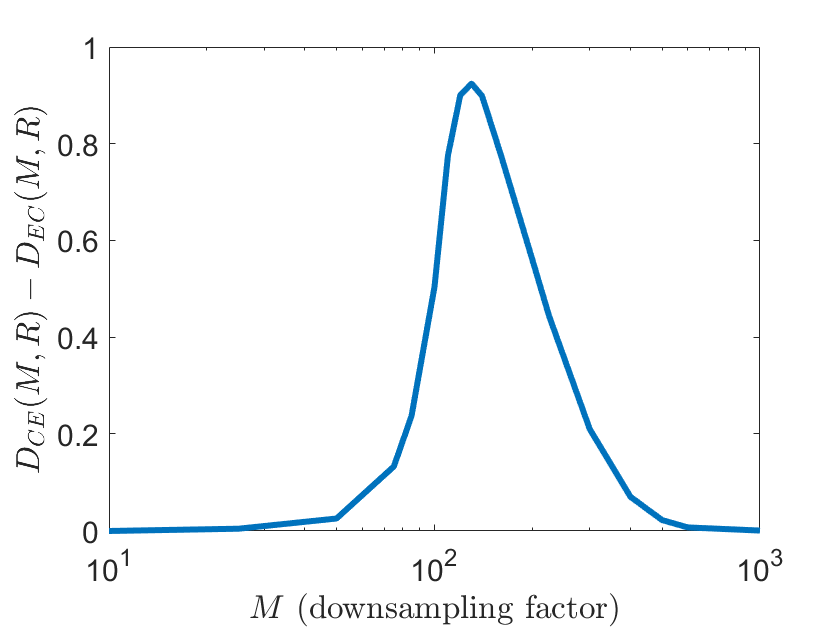}
    \caption{Distortion as a function of downsampling factor $M$ for a fixed $R = 0.01$ bits per symbol of $X$. The bottom figure shows $D_{CE}(M,R) - D_{EC}(M,R)$. 
    }
    \label{fig:D_vs_M}
\end{figure}

Figs.~\ref{fig:D_vs_R} and \ref{fig:D_vs_M} show the distortion expressions as functions of the bitrate $R$ and the decimation factor $M$, respectively. We see that both $D_{EC}(M,R)$ and $D_{CE}(M,R)$ are bounded from below by $D_X(R)$ and $\mmse(M)$, representing the minimal distortion only due to lossy compression and decimation, respectively. Further, both $D_{EC}(M,R)$ and $D_{CE}(M,R)$ approach the bounds in the two extremes of no decimation ($M=1)$ and infinite bitrate ($R\rightarrow \infty$). That is, for low values of $R$ compared to $M$, the distortion under both schemes is dominated by the error due to lossy compression, whereas the distortion is dominated by interpolation error when $R$ is large compared to $M$. \par
%
%
The bottom of Figs.~\ref{fig:D_vs_R} and \ref{fig:D_vs_M} illustrate the performance gap in using CE compared to EC, given by \eqref{eq:diff} for sufficiently large $MR$. This gap is maximal in the transition region between rate-dominant and decimation-dominant distortion. As can be seen in Fig.~\ref{fig:D_vs_R}, although the performance gap is positive for any $M$ and $R$, it is relatively small. For example, when $M=100$, the maximal value of \eqref{eq:diff}, i.e. the performance loss from using CE instead of the optimal EC, is $2.7\%$, and thus CE may be used as a near approximation to optimal performance when EC is impractical or, due to an ignorance of $M$ at the encoder, impossible.



\section{Conclusions \label{sec:conclusions}}

We have derived a closed-form expression for the minimal distortion in recovering a Gaussian random walk from a finite-bit representation of its decimated version. 
This expression quantifies the behavior of the minimal distortion subject to decimation and lossy compression. This expression also confirms the following expected behavior: convergence to the standard DRF of the random walk at low coding rates where encoding error dominates; an interpolation error floor at high coding rates where decimation error dominates; and increased degradation with increasing decimation. \par
In addition, we considered the distortion in recovering the random walk under a CE scheme. In this scheme, the encoding of the decimated process is done with respect to a random codebook derived from its rate-distortion achieving distribution. That is, a codebook that is designed to attain the DRF of the decimated process, rather than the original (not decimated) random walk. In particular, the encoder in this scheme is not informed of the decimation factor. 
 The comparison between the two distortion expressions  provides the excess distortion as a result of using the sub-optimal CE encoding. In particular, it provides a distortion bound on the price of an unknown decimation factor at the encoder. We show that this price is small and need be considered only for a narrow band of $R$ and $M$ values where neither source of distortion, i.e. neither the bit constraint nor the decimation, have become dominant distortion factors.\par
 

\section*{Acknowledgments}
This research was supported in part by the NSF Center for Science of Information (CSoI) under grant CCF-0939370.

\bibliographystyle{IEEEtran}
\bibliography{IEEEabrv,sampling}

\begin{thebibliography}{10}
\providecommand{\url}[1]{#1}
\csname url@samestyle\endcsname
\providecommand{\newblock}{\relax}
\providecommand{\bibinfo}[2]{#2}
\providecommand{\BIBentrySTDinterwordspacing}{\spaceskip=0pt\relax}
\providecommand{\BIBentryALTinterwordstretchfactor}{4}
\providecommand{\BIBentryALTinterwordspacing}{\spaceskip=\fontdimen2\font plus
\BIBentryALTinterwordstretchfactor\fontdimen3\font minus
  \fontdimen4\font\relax}
\providecommand{\BIBforeignlanguage}[2]{{%
\expandafter\ifx\csname l@#1\endcsname\relax
\typeout{** WARNING: IEEEtran.bst: No hyphenation pattern has been}%
\typeout{** loaded for the language `#1'. Using the pattern for}%
\typeout{** the default language instead.}%
\else
\language=\csname l@#1\endcsname
\fi
#2}}
\providecommand{\BIBdecl}{\relax}
\BIBdecl

\bibitem{1057738}
R.~Dobrushin and B.~Tsybakov, ``Information transmission with additional
  noise,'' vol.~8, no.~5, pp. 293--304, 1962.

\bibitem{1054469}
J.~Wolf and J.~Ziv, ``Transmission of noisy information to a noisy receiver
  with minimum distortion,'' \emph{{IEEE} Trans. Inf. Theory}, vol.~16, no.~4,
  pp. 406--411, 1970.

\bibitem{1056251}
H.~Witsenhausen, ``Indirect rate distortion problems,'' \emph{{IEEE} Trans.
  Inf. Theory}, vol.~26, no.~5, pp. 518--521, 1980.

\bibitem{Kipnis2014}
A.~Kipnis, A.~J. Goldsmith, Y.~C. Eldar, and T.~Weissman, ``Distortion rate
  function of sub-{N}yquist sampled {G}aussian sources,'' \emph{{IEEE} Trans.
  Inf. Theory}, vol.~62, no.~1, pp. 401--429, Jan 2016.

\bibitem{KipnisRini2017}
\BIBentryALTinterwordspacing
A.~Kipnis, S.~Rini, and A.~J. Goldsmith, ``Compress and estimate in
  multiterminal source coding,'' 2017, unpublished. [Online]. Available:
  \url{https://arxiv.org/abs/1602.02201}
\BIBentrySTDinterwordspacing

\bibitem{KipnisWiener}
\BIBentryALTinterwordspacing
A.~Kipnis, A.~J. Goldsmith, and Y.~C. Eldar, ``The distortion-rate function of
  sampled {W}iener processes,'' \emph{CoRR}, vol. abs/1608.04679, 2016.
  [Online]. Available: \url{http://arxiv.org/abs/1608.04679}
\BIBentrySTDinterwordspacing

\bibitem{weiss1994aspects}
G.~Weiss, \emph{Aspects and Applications of the Random Walk," Random Materials
  and Processes, Series Eds. H. Stanley and E. Guyon}.\hskip 1em plus 0.5em
  minus 0.4em\relax North Holland, 1994.

\bibitem{berger1971rate}
T.~Berger, \emph{Rate-distortion theory: A mathematical basis for data
  compression}.\hskip 1em plus 0.5em minus 0.4em\relax Englewood Cliffs, NJ:
  Prentice-Hall, 1971.

\bibitem{1056823}
A.~Kolmogorov, ``On the {S}hannon theory of information transmission in the
  case of continuous signals,'' vol.~2, no.~4, pp. 102--108, December 1956.

\bibitem{berger1970information}
T.~Berger, ``Information rates of {W}iener processes,'' \emph{{IEEE} Trans.
  Inf. Theory}, vol.~16, no.~2, pp. 134--139, 1970.

\bibitem{gallager1968information}
R.~Gallager, \emph{Information theory and reliable communication}.\hskip 1em
  plus 0.5em minus 0.4em\relax Wiley, 1968.

\bibitem{kontoyiannis2006mismatched}
I.~Kontoyiannis and R.~Zamir, ``Mismatched codebooks and the role of entropy
  coding in lossy data compression,'' \emph{{IEEE} Trans. Inf. Theory},
  vol.~52, no.~5, pp. 1922--1938, 2006.

\end{thebibliography}

\newpage

\onecolumn
\section*{Appendix}

\subsection{Eigenvalues of $\Sigma_{\widetilde{X}^N}$}
Here we sketch our derivation of the eigenvalues for the covariance matrix of the down-sampled and interpolated sequence $\widetilde X$, as given in \eqref{eq:tild_lam} and used in the characterization of the $D_{EC}$ in \eqref{eq:DEC}. We follow the same proof approach as used by Berger for the non-decimated random Gaussian walk in \cite{berger1970information}.

Let $\widetilde \Sigma_{{X}^N}$ be the covariance matrix of the interpolated process $\widetilde{X}^N$ with eigenvalues $\widetilde \lambda$ and eigenvectors $\widetilde{f}$. Under the assumption of unit variance, this gives us

 \begin{displaymath}
 \begin{split}
 \begin{aligned}
 \widetilde\lambda \widetilde f(i) &= \sum_{j = 0}^{i^- - 1} j \widetilde f(j) + \sum_{j = i^-}^{i^+ - 1}  \widetilde f(j)\big(\frac{M - j}{M} i^- + \frac{j}{M} i\big) + i \sum_{j = i^+}^{N-1} \widetilde f(j) \\
 \widetilde\lambda[\widetilde f(i&+2) - 2\widetilde f(i+1) + \widetilde f(i)] = 0 \\
 \end{aligned}
 \end{split}
 \end{displaymath}

 From this we conclude that the eigenvectors $f$ must be piece-wise linear. However, since the interpolation and downsampling factors are the same, the boundary conditions remain the same as in the uninterpolated Wiener process case. In \cite{berger1970information}, they were shown as 
 \begin{displaymath}
 \widetilde f(0) = 0 \qquad \widetilde{\lambda}[\widetilde f(n)-\widetilde f(n-1)]=\widetilde f(n)
 \end{displaymath}

 Thus to satisfy both the piece-wise linearity as well as the boundary conditions, we consider piece-wise linear interpolations of the decimated eigenvectors of the original Wiener process. That is, if we let $\lambda$ and $f$ be the eigenvalues and eigenvectors of $\Sigma_X$, the covariance of the un-decimated sequence $X$, as derived in \cite{berger1970information},

 \begin{equation}\label{eq:f_xtild}
 \begin{split}
 &\widetilde f_{k}(n) = B \left( \frac{n^+ - n}{M} f_k(n^-) + \frac{n - n^-}{M} f_k(n^+) \right) \\
 &\qquad \qquad f_{k}(n) = A \sin \left( \frac{2k - 1}{2N + 1} \pi n \right) \\
 &\widetilde f_{k}(n) = C \bigg(\frac{n^+-n}{M}\sin \Big[\Big(\frac{2k-1}{2N+1}\Big) \pi i^-\Big] +\frac{n - n^-}{M} \sin\Big[\Big(\frac{2k-1}{2N+1}\Big) \pi i^+\Big]\bigg)
 \end{split}
 \end{equation}

 \noindent where $n^+ = \lceil \frac{n}{M} \rceil M$ and $n^- = \lfloor \frac{n}{M} \rfloor M$, and $A,C$ are normalization constants. Since each element of $\widetilde{X}$ is the result of a linear combination of elements of the decimated sequence $Y$, the dimensionality of $\widetilde{X} \leq$ the dimentionality of $Y$, and thus \eqref{eq:f_xtild} only holds for $k \leq N_M$. For $k > N_M$, $\widetilde \lambda_k = 0$, and thus we are unconcerned with the corresponding eigenvectors.

 Now to determine the eigenvalues, let $p \triangleq jM$, where j is an integer.

 \begin{displaymath}
 \begin{split}
 \begin{aligned}
 &\widetilde f(p) = f(p) = A \sin \left( \frac{2k - 1}{2N + 1} \pi p \right) \\
 &\widetilde \lambda[\widetilde f(p+2M) - 2\widetilde f(p+M) + \widetilde f(p)] \\ &= - C \frac{1}{3} \sin(\tau \pi (p + M)) [(M^2 - 1) \cos(\tau \pi M) + 2M^2 + 1] \\
 &\qquad \qquad \tau \triangleq \frac{2k-1}{2N+1} \\
 &\widetilde \lambda_{k} = \frac{C}{A} \frac{(M^2 - 1) \cos(\tau \pi M) + 2M^2 + 1}{12 M^2 \sin^2(\tau \frac{\pi}{2} M)} \\
 &\widetilde \lambda_{k} = D \left( \frac{2\cos^2(\tau \frac{\pi}{2} M) + 1}{12 \sin^2(\tau \frac{\pi}{2} M)} + \frac{1}{6M^2} \right)\\
 &\widetilde{\lambda}_k = M^2\Big[2\sin \Big( \frac{(2k-1)M}{2N+1}\frac{\pi}{2} \Big) \Big]^{-2} - \frac{M^2-1}{6},\\
 &\qquad D, C, A = \text{normalization constants}
 \end{aligned}
 \end{split}
 \end{displaymath}
This proves Eq \ref{eq:tild_lam}.

\subsection{Proof of Lem.~\ref{lem:mse_connection}}

We here provide the complete proof for \eqref{eq:DCE_oper}.  Note that throughout the main paper, all sequences are treated as 1-indexed (that is, the first element is $X_1$).  For the simplicity of the calculation, the following derivation is done for a 0-indexed sequence (thus the first element is $X_0$).  However, by simply re-indexing the final result, we achieve \eqref{eq:DCE_oper}.

\begin{displaymath}
\begin{split}
    \mmse(\widetilde X | \widehat Y) &= \frac{1}{N} \sum_{n=0}^{N-1} \mathbb E \big[\widetilde X_n - \mathbb E [ \widetilde X_n | \widehat Y] \big]^2\\
    &= \frac{1}{N} \sum_{n=0}^{N_M-1} \sum_{m=nM}^{(n+1)M-1} \mathbb E \big[\widetilde X - \mathbb E [ \widetilde X | \widehat Y] \big]^2\\
    &= \frac{1}{N}\sum_{n=0}^{N_M-1} \sum_{m=nM}^{(n+1)M-1} \mathbb E \Big[\widetilde X - \frac{(n+1)M - m}{M} \widehat Y_{n} - \frac{m - nM}{M} \widehat Y_{n+1} \Big]^2 \\
    &\qquad \widehat Y_n = Y_n - \epsilon_n \\
    &= \frac{1}{N}\sum_{n=0}^{N_M-1} \sum_{m=nM}^{(n+1)M-1} \mathbb E \Big[\frac{(n+1)M - m}{M} \epsilon_{n} + \frac{m - nM}{M} \epsilon_{n+1} \Big]^2 \\
    &= \frac{1}{NM}\sum_{n=0}^{N_M-1} \frac{2M^2+3M+1}{6} \mathbb E[\epsilon_n^2] + \frac{M^2-1}{3} \mathbb E[\epsilon_n \epsilon_{n+1}] + \frac{2M^2-3M+1}{6} \mathbb E[\epsilon_{n+1}^2]\\
    &= \frac{1}{NM}\Bigg ( \sum_{n=0}^{N_M} \Big [ \frac{2M^2+1}{3} \mathbb E[\epsilon_n^2] \Big ] + \sum_{n=0}^{N_M-1} \Big [ \frac{M^2-1}{3} \mathbb E[\epsilon_n \epsilon_{n+1}]\Big] \\&\qquad \qquad - \frac{2M^2-3M+1}{6} \mathbb E[\epsilon_0^2] - \frac{2M^2+3M+1}{6} \mathbb E[\epsilon_{N_M}^2] \Bigg ) \\
    & \qquad \qquad \mathbb E [\epsilon_0^2] = 0 \\
    &= \frac{2M^2+1}{3NM}\sum_{n=0}^{N_M} \mathbb E[\epsilon_n^2] + \frac{M^2-1}{3NM} \sum_{n=0}^{N_M-1} \mathbb E[\epsilon_n \epsilon_{n+1}] - \frac{2M^2+3M+1}{6NM} \mathbb E[\epsilon_{N_M}^2]
\end{split}
\end{displaymath}

Re-indexing from $1$ rather than $0$ yields \eqref{eq:DCE_oper}.

\subsection{Convergence of $\mathbb E[\hat{\epsilon}_n \hat{\epsilon}_{n+1}]$}

In order to evaluate the cross term in \eqref{eq:DCE_oper}, first note that
\begin{align*}
& \frac{1}{N_M} \sum_{n=1}^{N_M-1} u_k[n]u_{k}[n+1]  \\
& =  \frac{1}{N_M} \sum_{n=1}^{N_M-1} A_k^2 \sin \left(\frac{2k-1}{2N_M+1} \frac{\pi}{2}n \right) 
 \sin \left(\frac{2k-1}{2N_M+1} \frac{\pi}{2}(n+1) \right) \\
 & = \frac{1}{N_M} \sum_{n=1}^{N_M-1} A_k^2 \sin \left(\frac{2k-1}{2N_M+1} \pi n \right) 
 \sin \left(\frac{2k-1}{2N_M+1} \pi(n+1) \right) \\
 & = \frac{A_k^2}{2} \cos \left(\frac{2k-1}{2N_M+1} \pi \right) + o(1),
\end{align*}
where $o(1)\overset{N\rightarrow \infty}{\longrightarrow} 0$, and this last transition is because 
\[
\sum_{n=1}^{N_M-1} \cos \left(\frac{2k-1}{2N+1}\pi(2n+1) \right)
\]
is bounded in $N_M$. Similarly, we have
\begin{align*}
1 & = \sum_{n=1}^{N_M} \left(u_k[n]\right)^2 = \sum_{n=1}^{N_M} A_k^2 \sin^2 \left(\frac{2k-1}{2N_M+1} \pi n \right) \\
& = \frac{A_k^2 N_M}{N_M} \sum_{n=1}^{N_M} \left(\frac{1}{2} - \frac{1}{2}\cos \left( 2 \frac{2k-1}{2N_M+1} \pi n \right) \right),
\end{align*}
and hence $A_k^2 N_M \overset{N\rightarrow \infty}{\longrightarrow} 2$. As a result
\begin{align}
    &\frac{1}{N_M} \sum_{n=1}^{N_M-1} \mathbb E [\hat{\epsilon}_n \hat{\epsilon}_{n+1}] \nonumber  \\
    & = \frac{1}{N_M} \sum_{n=1}^{N_M-1} \sum_{k=1}^{N_M} u_k[n]u_k[n+1]\min\{\theta,\lambda_k \} \nonumber  \\
    & = \frac{1}{N_M} \sum_{k=1}^{N_M} \min \{\theta, \lambda_k\} \sum_{n=1}^{N_M-1} u_k[n]u_k[n+1] \nonumber  \\
    & = o(1) + \frac{1}{N_M} \sum_{k=1}^{N_M} \frac{N_M A_k^2}{2} \min \{\theta, \lambda_k\} \cos\left(\frac{2k-1}{2N_M+1} \pi  \right) \label{eq:cross_term_proof}
\end{align}
We now take the limit in \eqref{eq:cross_term_proof} as $N\rightarrow \infty$ with $k/N_M = kM/N \rightarrow \phi \in [0,1]$. In this limit, the spectrum of $\Sigma_Y$ converges to $M S(\phi)$, where
\[
S(\phi) \triangleq \left[ 2 \sin \left(\frac{\phi \pi}{2} \right) \right]^{-2}. 
\]
Therefore, the sum in \eqref{eq:cross_term_proof} converges to 
\[
\int_0^1 \min \left\{ \theta, M S(\phi) \right\} \cos(\pi \phi ) d\phi = M \int_0^1 \min \left\{ \theta', S(\phi) \right\} \cos(\pi \phi ) d\phi, 
\]
where $\theta' = \theta /M$.

\subsection{Proof of Prop.~\ref{prop:properties}}

When $\theta \leq 1/4$, \eqref{eq:drf_int} reduces to 
\begin{equation}
\label{eq:drf_int_low_theta}
\begin{split}
    D_{X}(R_\theta) & = \theta \\
    R_\theta & =  \log\Big[\frac{1}{ \sqrt{\theta}}\Big]
\end{split}
\end{equation}
and hence $D_X(R) = 2^{-2R}$. Since $D_{CE}(M,R)$ depends on the same asymptotic eigenvalue density $S(\phi)$, we conclude from \eqref{eq:CE_DRF} that 
\begin{align*}
    D_{CE}(M,R) & = \frac{M-M^{-1}}{6} + \left( \frac{2M^2+1}{3M} \right) 2^{-2MR}.
\end{align*}


For the function $D_{EC}(M,R)$, the minimum of $\widetilde{S}(\phi)$ is
\[
 \frac{1}{2} - \frac{1-M^{-2}}{6} = \frac{1 + M^{-2}}{12},
\]
and for $\theta$ smaller than this value we have
\begin{align*}
    & D_{EC} = \frac{M-M^{-1}}{6} + M \theta \\
    & M R  = \frac{1}{2} \int_0^1 \log \left[ \big(2\sin(\phi\pi/2)\big)^{-2} - \frac{1-M^{-2}}{6}\right] d\phi - \frac{\log(\theta)}{2}, \\
    & = -1 + \log \left[1+\sqrt{1-4 \frac{1-M^{-2}}{6}} \right] - \frac{\log(\theta)}{2}.
\end{align*}
Eliminating $\theta$ from the last expression, leads to \eqref{eq:DEC_high_rate}
\[
D_{EC}(M,R) = \frac{M-M^{-1}}{6} + \left( \frac{1+\left(2+\sqrt{3+\frac{6}{M^2}}\right) M^2}{6 M} \right) 2^{-2MR},
\]
which holds whenever 
\begin{align*}
MR & \geq -1 + \log \left[1+\sqrt{1+4 \frac{1-M^{-2}}{6} } \right] - \log \left[\frac{1+M^{-2}}{12} \right] \\
& = \log \frac{ \sqrt{3}M+\sqrt{5M^2-2}} {\sqrt{2+M^2}}.
\end{align*} 

\subsection{Proof of Thm.~\ref{thm:CE_DRF} }
In view of \eqref{eq:decomp}, it is enough to show that $\mmse \left(\widetilde{X}^N|f_{CE}(Y^{N_M}) \right)$ converges to the water-filling part in \eqref{eq:CE_DRF}. Usign Lem.~\ref{lem:mse_connection} we have
\begin{align*}
\frac{2M^2+1}{3M^2} \frac{N+M}{N}\mmse \left( Y^{N_M}|f_{CE}(Y^{N_M}) \right)  \\
\overset{N\rightarrow \infty}{\longrightarrow} \frac{2M^2+1}{3M^2}D_Y(MR) = \frac{2M^2+1}{3M}D_X(MR),
\end{align*}
which, using \eqref{eq:drf_int} with $\theta'=\theta/M$, leads to \eqref{eq:CE_DRF_R} and to the first term in \eqref{eq:CE_DRF_D}. In order evaluate the term $\mathbb E[\epsilon_n \epsilon_{n+1}]$ in \eqref{eq:DCE_oper}, we consider the properties of the encoder $f_{CE}$. 
The joint distribution of two sequence $Y^{N_M}$ and $\hat{Y}^{N_M}$, at the input and output of the encoder, respectively, behaves as if both sequences were drawn from the joint $P^*_{Y^{N_M},\hat{Y}^{N_M}}$ that attains the DRF of the vector $Y^{N_M}$ \cite{gallager1968information, kontoyiannis2006mismatched}. In our case, this distribution is defined by
\[
Y^{N_M} = \widehat Y^{N_M} + UZ^{N_M},
\]
where $U$ is the matrix of eigenvalues in the eigenvalue decomposition 
\begin{displaymath}
\begin{split} \label{eq:eigenvalues_decomp}
    \Sigma_Y = U \Lambda_Y U^H
\end{split}
\end{displaymath}
of $\Sigma_Y$, and $Z^{N_M} \sim \mathcal N(0,\Lambda_\theta)$ where $(\Lambda_\theta)_{n,n} = \min\{(\Lambda_Y)_{n,n}, \theta \}$. The rows of $U$ and entries of $\Sigma_Y$ are given by \cite{berger1970information}:
\begin{equation}
\begin{split}
    \lambda_k =& \sigma_Y^2 \left[2 \sin \left (\frac{2k-1}{2N_M+1} \frac{\pi}{2}\right) \right]^{-2} \\
    u_k[n] &= A_k \sin \left(\frac{2k-1}{2N_M+1} \pi n \right),
\end{split}
\end{equation}
where $k = 1, ..., N_M$, $\sigma_Y^2 = M$, and $A_k$ is a normalization constant. Let $\hat{\epsilon} \triangleq Y - \widehat{Y}$. From the above, we conclude 
\[
 \mathbb E \left[ \epsilon_n \epsilon_{n+1}\right] = \mathbb E \left[ \hat{\epsilon}\hat{\epsilon}_{n+1}\right]
 = \sum_{k=1}^{N_M} u_k[n] u_k[n+1]. 
\]
We showed in the Appendix C that 
\begin{equation}
    \label{eq:cross_term_convergence}
\frac{1}{N} \sum_{n=1}^{N_M-1} \mathbb E \left[ \hat{\epsilon}_n \hat{\epsilon}_{n+1} \right] \overset{N\rightarrow \infty}{\longrightarrow}  \int_0^1 \min \left\{ \theta', S(\phi) \right\} \cos(\pi \phi) d\phi,
\end{equation}
and $\mathbb E[\epsilon_{N_M}^2]/N \rightarrow 0$. This expression leads to the second term in \eqref{eq:CE_DRF_D}. \QEDA

\end{NoHyper}
\end{document}